\newtheorem{result}[theorem]{Result}
\definecolor{henrik}{rgb}{1,.0,.4}
\newcommand{\RR}{\mathbb{R}}
\newcommand{\id}{\mathbbm{1}}
\newcommand{\mc}[1]{\mathcal{#1}}
\newcommand{\ketbra}[2]{| #1 \rangle \langle #2 |}
\newcommand{\proj}[1]{\vert #1\rangle\!\langle#1 \vert}
\newcommand{\norm}[1]{\left\Vert #1 \right\Vert}
\def\one{\mathbbm{1}}
\newcommand{\Tr}{\operatorname{tr}}
\newcommand{\tr}{\Tr}
\newcommand{\fu}{Dahlem Center for Complex Quantum Systems, Freie Universit{\"a}t Berlin, 14195 Berlin, Germany}
\begin{document}
\title{Axiomatic characterization of the quantum relative entropy and free energy}

\author{Henrik Wilming, Rodrigo Gallego, Jens Eisert}
\affiliation{\fu}

\begin{abstract}
Building upon work by Matsumoto, we show that the quantum relative entropy with full-rank second argument is determined by four simple axioms:
i) Continuity in the first argument, ii) the validity of the data-processing inequality, iii) additivity under tensor products, and iv) super-additivity.
This observation has immediate
implications for quantum thermodynamics, which we discuss. Specifically,
we demonstrate that, under reasonable restrictions, the free energy is singled out as a
measure of athermality. In particular, we consider an extended class of Gibbs-preserving maps as free operations in a resource-theoretic
framework, in which a catalyst is allowed to build up correlations with the system at hand. The free energy is the only extensive and
continuous function that is monotonic under such free operations.
\end{abstract}

\maketitle
\section{Introduction}
The quantum relative entropy captures the statistical distinguishability of two quantum states. For two
states $\rho$ and $\sigma$ supported on the same Hilbert space it is defined as
\begin{equation}\label{eq:relent}
S(\rho || \sigma) = \tr\left(\rho \log \rho - \rho \log \sigma \right),
\end{equation}
whenever $\mathrm{supp}(\rho) \subseteq \mathrm{supp}(\sigma)$ and set to infinity otherwise. This quantity has a clear
interpretation in the statistical discrimination of $\rho$ from $\sigma$,
appearing as an error rate in quantum hypothesis testing \cite{Hiai2005,Ogawa2005,GeneralStein},
a result commonly known as
Stein's Lemma. It is hence no surprise that this quantity appears in a plethora of places in contemporary quantum physics. This is particularly
 true in the context of quantum information theory \cite{Vedral2002}.
In the  relative entropy of entanglement it quantifies the entanglement content of a general quantum state \cite{Vedral1997}.
More generally, it appears in conversion rates in so-called resource theories \cite{Brandao2013,Brandao2015}.
Relatedly, it takes center stage in the
 problem of (approximately) recovering quantum information \cite{Junge2016}. But the applications are not confined
 to quantum information theory. In many-body physics, it provides bounds on the clustering of correlations in space in terms of the mutual information \cite{Kastoryano2013,Bernigau2015}. In quantum thermodynamics \cite{Goold2016}, which is the context that is in the focus
 of attention in this note,
 its interpretation as the non-equilibrium free energy gives an upper bound to how much work can be extracted from a non-equilibrium system and is important in answering how to operationally define work in the quantum regime in the first place \cite{Gallego2015}. Not the least,
 it has appeared in the context of the AdS/cft correspondence \cite{Lashkari2015}, again drawing from and building upon the
 above mentioned applications.

In this note, we restrict to the case where the second argument $\sigma$ has full rank and only consider finite dimensional Hilbert spaces. Essentially by re-interpreting and building upon a theorem by Matsumoto \cite{Matsumoto2010}, we will show that the quantum relative entropy \eqref{eq:relent} is (up to a constant factor) the only function featuring the following four properties:
\begin{enumerate}
\item \label{prop:continuity} \emph{Continuity}: For fixed $\sigma$, the map $\rho\mapsto S(\rho || \sigma)$ is
continuous \cite{AudenaertContinuous}.
\item \label{prop:dpi} \emph{Data-processing inequality}: For any quantum channel $T$ we have,
\begin{equation}
	S(T(\rho)|| T(\sigma)) \leq S(\rho || \sigma).
	\end{equation}
\item \label{prop:additivity} \emph{Additivity}:
\begin{equation}
S(\rho_1\otimes \rho_2 || \sigma_1\otimes \sigma_2) = S(\rho_1||\sigma_1) + S(\rho_2||\sigma_2).
	\end{equation}
\item \label{prop:super-add} \emph{Super-additivity}: For any bipartite state $\rho_{1,2}$ with marginals $\rho_1,\rho_2$ we have
\begin{equation}
S(\rho_{1,2} || \sigma_1\otimes \sigma_2) \geq S(\rho_1||\sigma_1) + S(\rho_2||\sigma_2).
	\end{equation}
\end{enumerate}
No subset of these properties characterizes the relative entropy uniquely, but the Properties \ref{prop:continuity} to \ref{prop:additivity} are, for example, also fulfilled by the Renyi-divergences \cite{Tomamichel2016}.

The uniqueness of the quantum relative entropy under Properties \ref{prop:continuity}-\ref{prop:super-add} has significant
implications for \emph{quantum thermodynamics} (QT), which we elaborate upon. The formalism of QT has recently
been recast in within the framework of a resource theory \cite{Janzing00,Brandao2013,Horodecki2013}, so one in which
quantum states that are different from Gibbs states (at a fixed environment temperature) are considered resources. We will refer here to this kind of resource as
\emph{athermality}. Within this resource theory one is, among other problems, interested in finding \emph{bona fide} measures of athermality.
These are functions that quantify the amount of athermality of a given system.
A requirement for a function reasonably quantifying the degree of athermality is that the
it does not increase under the free operations of the resource theory. The problem of identifying such functions
has been studied intensively in the last years for different classes of free operations, providing families of valid measures that are regarded as generalizations of the \emph{free energy} \cite{Brandao15,Brandao2015,Gour16,Buscemi16}. They share the property that they are
all based of generalizations of the quantum relative entropy \eqref{eq:relent}.

In this work, we will use the uniqueness result on the quantum relative entropy to show that the usual non-equilibrium free energy emerges as the unique \emph{continuous} and \emph{extensive} measure of athermality under a certain meaningful choice of free operations.
In this sense, we also provide a fresh link of resource-theoretic considerations in quantum
thermodynamics to more traditional descriptions of thermodynamic processes in the quantum regime.

\section{Axiomatic derivation of quantum relative entropy}

We start by formally stating the main technical result.

\begin{theorem}[Uniqueness theorem]\label{thm:maintheorem}
Let $f$ be a function on pairs of quantum states acting on the same finite dimensional Hilbert space, with the second argument having full rank. Suppose $f$ fulfills Properties \ref{prop:continuity}-\ref{prop:super-add}. Then it is given by
\begin{equation}
f(\rho,\sigma)= C \tr\left(\rho \log \rho - \rho \log \sigma \right) := C S(\rho \| \sigma),
\end{equation}
for some constant $C>0$.
\end{theorem}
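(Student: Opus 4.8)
My plan is to pin $f$ down first on commuting pairs and then bootstrap to arbitrary pairs. A few facts follow from Properties~\ref{prop:dpi} and \ref{prop:additivity} alone: applying data processing to the channel that traces everything out, together with additivity forcing $f$ to vanish on the trivial pair on $\CC$, gives $f\geq 0$; applying data processing to the constant channel $\tau\mapsto\sigma^{\otimes n}$ and using $f(\sigma^{\otimes n},\sigma^{\otimes n})=n f(\sigma,\sigma)$ together with $f\geq 0$ gives $f(\sigma,\sigma)=0$; and since conjugation by a unitary is a channel with a channel inverse, $f$ is unitarily invariant. Hence, restricted to commuting pairs, $f$ descends to a well-defined function $F(P\|Q)$ of the pair of probability vectors obtained by simultaneous diagonalisation (with $Q$ of full support), and $F$ inherits data processing under stochastic maps, additivity, super-additivity, and continuity.

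The crux is to show $F(P\|Q)=C\,D_{\mathrm{KL}}(P\|Q)$ for a constant $C\geq 0$, with $D_{\mathrm{KL}}$ the classical relative entropy. I would first reduce to the case of uniform second argument: for $Q$ with rational weights $Q_i=m_i/M$, the stochastic map splitting outcome $i$ uniformly into $m_i$ sub-outcomes carries $Q$ to the uniform distribution $U_M$ and has a stochastic left inverse (merge the sub-outcomes back), so by data processing both ways $F(P\|Q)=F(P'\|U_M)$ with $P'$ the correspondingly expanded version of $P$; continuity in the first argument then removes the rationality assumption. It remains to show that $R\mapsto F(R\|U_M)$ is an affine function of the Shannon entropy $H(R)$ with $M$-independent slope; unwinding the expansion identity then returns exactly $C\,D_{\mathrm{KL}}$. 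For this last point one uses that $R\mapsto F(R\|U_M)$ is continuous, permutation-symmetric, Schur-convex (data processing under bistochastic maps), vanishes at $U_M$, is additive under tensor products, and is super-additive; here Property~\ref{prop:super-add} is precisely what excludes the Rényi alternatives, which satisfy the other three conditions but violate super-additivity for suitably (anti\nobreakdash-)correlated inputs. I expect this step---distilling the four axioms into the additivity-plus-majorisation functional equation that characterises the entropy---to be the main obstacle of the argument.

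Finally I would bootstrap to general pairs using reverse tests and pinching. A reverse test for $(\rho^{\otimes n},\sigma^{\otimes n})$ is a channel $\Lambda$ and probability vectors $p,q$ with $\Lambda(p)=\rho^{\otimes n}$, $\Lambda(q)=\sigma^{\otimes n}$ (these exist, e.g.\ with a two-point source). If $C=0$, data processing and additivity give $f(\rho,\sigma)=\tfrac1n f(\rho^{\otimes n},\sigma^{\otimes n})\leq\tfrac1n F(p\|q)=0$, so $f\equiv 0$; discarding this degenerate case, assume $C>0$ and set $g:=f/C$, so $g$ obeys data processing, additivity, and $g=S$ on commuting pairs. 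For the bound $g\leq S$: the same reverse-test step gives $g(\rho,\sigma)\leq\tfrac1n D_{\mathrm{KL}}(p\|q)$, and by Matsumoto's theorem~\cite{Matsumoto2010} $S(\rho\|\sigma)$ equals the infimum of $\tfrac1n D_{\mathrm{KL}}(p\|q)$ over reverse tests of $(\rho^{\otimes n},\sigma^{\otimes n})$, $n\geq 1$, so $g(\rho,\sigma)\leq S(\rho\|\sigma)$. For the bound $g\geq S$: let $\mc{P}_n$ be the pinching onto the eigenspaces of $\sigma^{\otimes n}$; it fixes $\sigma^{\otimes n}$ and makes $\mc{P}_n(\rho^{\otimes n})$ commute with $\sigma^{\otimes n}$, so data processing and additivity give $g(\rho,\sigma)=\tfrac1n g(\rho^{\otimes n},\sigma^{\otimes n})\geq\tfrac1n g(\mc{P}_n(\rho^{\otimes n}),\sigma^{\otimes n})=\tfrac1n S(\mc{P}_n(\rho^{\otimes n})\|\sigma^{\otimes n})$; since $\sigma^{\otimes n}$ has only polynomially many distinct eigenvalues, the pinching inequality bounds the data-processing gap $S(\rho^{\otimes n}\|\sigma^{\otimes n})-S(\mc{P}_n(\rho^{\otimes n})\|\sigma^{\otimes n})=S(\rho^{\otimes n}\|\mc{P}_n(\rho^{\otimes n}))$ by $O(\log n)$ (equivalently, one may invoke the achievability part of quantum Stein's lemma), giving $g(\rho,\sigma)\geq S(\rho\|\sigma)$. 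Combining the two bounds yields $g=S$, i.e.\ $f=C\,S$ with $C>0$.
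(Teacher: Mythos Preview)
Your route is quite different from the paper's, which is much shorter. The paper does not attempt a classical characterisation at all: it proves a brief lemma showing that continuity, additivity and super-additivity together imply that $f$ is \emph{lower asymptotically semi-continuous} in Matsumoto's sense, and then invokes Matsumoto's theorem (data processing $+$ additivity $+$ l.a.s.\ $\Rightarrow$ $f\propto S$) as a black box. The lemma is essentially three lines: if $\norm{\rho'_n-\rho^{\otimes n}}_1\to 0$ then each one-site marginal $\rho'_{n,i}\to\rho$ by contractivity of the trace norm, and super-additivity together with additivity give
\[
\tfrac{1}{n}\bigl(f(\rho'_n,\sigma^{\otimes n})-f(\rho^{\otimes n},\sigma^{\otimes n})\bigr)\;\geq\;\min_i f(\rho'_{n,i},\sigma)-f(\rho,\sigma)\;\longrightarrow\;0
\]
by continuity in the first argument. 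Thus in the paper super-additivity enters only to reduce an $n$-copy difference to single-copy marginals, not to isolate the functional form on commuting pairs.

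Your bootstrap from commuting to general pairs via reverse tests and pinching is correct and is essentially how Matsumoto's theorem is proved, so in effect you are re-deriving that result rather than citing it. The classical half, however, carries not only the acknowledged obstacle but also a gap you pass over: the split/merge identity $F(P\|Q)=F(P'\|U_M)$ holds only for rational $Q$, and ``continuity in the first argument'' cannot remove that restriction, since it is the \emph{second} argument you would be approximating. Nothing in Properties~\ref{prop:continuity}--\ref{prop:super-add} gives continuity of $F(P\|\cdot)$ directly, so you would need a separate device (e.g.\ an asymptotic/typicality argument already at the classical level) to extend the identification $F=C\,D_{\mathrm{KL}}$ to irrational $Q$. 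The paper sidesteps both the acknowledged and the unacknowledged difficulty by pushing all of this into Matsumoto's black box and reducing its own contribution to the one-paragraph l.a.s.\ lemma above.
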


The proof relies on a characterization of the relative entropy in terms of different properties laid out in Ref.\
\cite{Matsumoto2010}. To state it, we first require a definition: Let
$(\rho,\sigma)$ be a pair of states on a finite-dimensional Hilbert space $\mc{H}$ and $\{\rho'_n\}$ be a sequence of states on the
Hilbert spaces $\mc{H}^{\otimes n}$. We define a function $f$ on pairs of quantum states to be
\emph{lower asymptotically semi-continuous} (l.a.s.) with respect to
$\sigma$ if
\begin{equation}
    \lim_{n\rightarrow \infty}\norm{\rho^{\otimes n}-\rho'_n}_1 =0
\end{equation}
implies
\begin{equation}
    \liminf_{n\rightarrow \infty}
    \frac{1}{n}(f(\rho'_n,\sigma^{\otimes n})-f(\rho^{\otimes n},\sigma^{\otimes n}))\geq 0.
\end{equation}
Then Matsumoto's theorem \cite{Matsumoto2010} for the relative entropy can stated in the following way.
\begin{theorem}[Matsumoto]\label{thm:matsumoto}
Let $f$ fulfill the data-processing inequality, additivity and be lower asymptotically semi-continuous with respect to all $\sigma$.
Then $f\propto S$.
\end{theorem}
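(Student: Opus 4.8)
The plan is to squeeze $f(\rho,\sigma)$ between two quantities that both converge to $C\,S(\rho\|\sigma)$, exploiting the data-processing inequality (DPI) in its two natural directions --- coarse-graining a quantum pair to a classical one by a measurement channel, and embedding a classical pair into a quantum one by a preparation channel --- while additivity lets me pass to the tensor powers $(\rho^{\otimes n},\sigma^{\otimes n})$ and divide by $n$. Lower asymptotic semi-continuity (l.a.s.) plays a twofold role: at the classical level it singles out the relative entropy among the R\'enyi divergences, which also obey DPI and additivity but fail l.a.s., and at the quantum level it absorbs the vanishing approximation errors produced by the asymptotic constructions.

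First I would reduce to the classical (commutative) case. For a measurement channel $\Lambda$, DPI gives $f(\rho,\sigma)\geq f(\Lambda(\rho),\Lambda(\sigma))$, while for a preparation channel $P$ with $P(p)=\rho$ and $P(q)=\sigma$ it gives $f(\rho,\sigma)\leq f_{\mathrm{cl}}(p,q)$, where $f_{\mathrm{cl}}$ is $f$ restricted to pairs of classical distributions. The restricted $f_{\mathrm{cl}}$ again satisfies DPI (under stochastic maps), additivity, and l.a.s. I would then characterize $f_{\mathrm{cl}}$ directly: using the method of types and Sanov's theorem together with l.a.s., one shows $f_{\mathrm{cl}}(p,q)=C\,D(p\|q)$ for the classical Kullback--Leibler divergence $D$ and a single constant $C>0$, where l.a.s. is exactly what excludes the R\'enyi alternatives and fixes the linear (rather than $\alpha$-twisted) dependence.

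With the classical identity in hand, the lower bound comes from the forward (measured) direction. By the achievability part of quantum Stein's lemma (the Hiai--Petz theorem) the regularized measured relative entropy equals $S$, i.e.\ there are measurements $\Lambda_n$ on $\mc{H}^{\otimes n}$ with $\tfrac1n D(\Lambda_n(\rho^{\otimes n})\|\Lambda_n(\sigma^{\otimes n}))\to S(\rho\|\sigma)$. DPI and additivity then give $f(\rho,\sigma)=\tfrac1n f(\rho^{\otimes n},\sigma^{\otimes n})\geq \tfrac{C}{n}D(\Lambda_n(\rho^{\otimes n})\|\Lambda_n(\sigma^{\otimes n}))$, so $f(\rho,\sigma)\geq C\,S(\rho\|\sigma)$. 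For the matching upper bound I would invoke Matsumoto's \emph{reverse-test} construction: for each $n$ a pair of classical distributions $(\tilde p_n,\tilde q_n)$ and a preparation channel $P_n$ with $P_n(\tilde q_n)=\sigma^{\otimes n}$, $\norm{P_n(\tilde p_n)-\rho^{\otimes n}}_1\to 0$, and optimal reverse rate $\tfrac1n D(\tilde p_n\|\tilde q_n)\to S(\rho\|\sigma)$. DPI gives $f(P_n(\tilde p_n),\sigma^{\otimes n})\leq C\,D(\tilde p_n\|\tilde q_n)$, and l.a.s.\ applied to the sequence $P_n(\tilde p_n)$ converts this into $f(\rho,\sigma)\leq C\,S(\rho\|\sigma)$; combined with the lower bound this yields $f=C\,S$.

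The main obstacle is the reverse-test direction. The lower bound is a fairly direct packaging of Stein's lemma with DPI, but exhibiting a classical statistical model $(\tilde p_n,\tilde q_n,P_n)$ whose relative-entropy rate matches the quantum relative entropy --- and proving that $S$ is the infimal achievable reverse rate --- is the genuinely hard and genuinely quantum ingredient that Matsumoto's theorem supplies. A secondary subtlety is the classical characterization itself: ruling out the R\'enyi divergences forces an essential, non-obvious use of l.a.s., since DPI and additivity alone leave a one-parameter family of candidates.
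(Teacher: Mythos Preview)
The paper does not prove this theorem. Theorem~\ref{thm:matsumoto} is stated as Matsumoto's result and attributed to Ref.~\cite{Matsumoto2010}; the paper uses it only as a black box, combining it with Lemma~3 (which shows that Properties~\ref{prop:continuity}--\ref{prop:super-add} imply lower asymptotic semi-continuity) to obtain Theorem~\ref{thm:maintheorem}. There is therefore nothing in the paper to compare your argument against.

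That said, your sketch is a faithful high-level outline of Matsumoto's actual proof in \cite{Matsumoto2010}. The sandwich between the measured (forward) direction via Hiai--Petz and the reverse-test (preparation) direction is precisely his strategy, and you correctly flag the reverse-test construction --- producing classical pairs $(\tilde p_n,\tilde q_n)$ and channels $P_n$ with $P_n(\tilde q_n)=\sigma^{\otimes n}$, $\norm{P_n(\tilde p_n)-\rho^{\otimes n}}_1\to 0$, and $\tfrac1n D(\tilde p_n\|\tilde q_n)\to S(\rho\|\sigma)$ --- as the substantive and genuinely quantum ingredient. Your use of l.a.s.\ in the upper-bound step is also the right way round: with $\rho'_n=P_n(\tilde p_n)$ it yields $f(\rho,\sigma)\le\liminf_n\tfrac1n f(\rho'_n,\sigma^{\otimes n})$, which DPI then bounds by $\liminf_n\tfrac{C}{n}D(\tilde p_n\|\tilde q_n)$.

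One point where you are sketchier than Matsumoto: the intermediate classical characterization $f_{\mathrm{cl}}=C\,D$. You propose to obtain it from the method of types, Sanov's theorem, and l.a.s., which can be made to work but is itself a nontrivial lemma (excluding the R\'enyi family requires care). Matsumoto largely sidesteps a full classical axiomatization by working with explicit binary tests for which additivity and DPI pin down the value directly after normalization. This is a minor organizational difference rather than a gap in your plan.
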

The proof of Theorem \ref{thm:maintheorem} follows from the subsequent Lemma, which in turn implies that the Properties \ref{prop:continuity}-\ref{prop:super-add} give rise to the conditions of Theorem \ref{thm:matsumoto}.
\begin{lemma}[Lower asymptotically semi-continuity]
Let $f$ be a function on pairs of quantum states with the following properties,
\begin{itemize}
    \item The map $\rho\mapsto f(\rho,\sigma)$ is continuous for any fixed
        $\sigma$.
    \item Additivity: $f(\rho_1\otimes \rho_2,\sigma_1\otimes\sigma_2)=\sum_{i=1}^2
        f(\rho_i,\sigma_i)$.
    \item Super-additivity:
    \begin{equation}
    	f(\rho_{1,2},\sigma_1\otimes\sigma_2)\geq f(\rho_1\otimes \rho_2,\sigma_1\otimes\sigma_2).
\end{equation}
\end{itemize}
Then $f$ is lower asymptotically semi-continuous with respect to any
$\sigma$.
\end{lemma}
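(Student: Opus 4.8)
The plan is to reduce the statement, using additivity, to a lower bound on $\tfrac1n f(\rho'_n,\sigma^{\otimes n})$, and then to obtain that bound by tensorizing super-additivity over the $n$ factors of $\mc{H}^{\otimes n}$. First, by additivity $f(\rho^{\otimes n},\sigma^{\otimes n}) = n\,f(\rho,\sigma)$, so it suffices to prove $\liminf_{n\to\infty}\tfrac1n f(\rho'_n,\sigma^{\otimes n}) \ge f(\rho,\sigma)$.

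Next I would iterate super-additivity. Splitting $\mc{H}^{\otimes n} = \mc{H}\otimes\mc{H}^{\otimes(n-1)}$, super-additivity followed by additivity gives $f(\omega,\sigma^{\otimes n}) \ge f(\omega_1,\sigma) + f(\omega_{2,\dots,n},\sigma^{\otimes(n-1)})$ for any state $\omega$ on $\mc{H}^{\otimes n}$, where $\omega_1$ and $\omega_{2,\dots,n}$ denote the reduced states on the first copy and on the remaining $n-1$ copies. A short induction on $n$ (base case $n=2$ being exactly super-additivity plus additivity, the inductive step applying the hypothesis to $\omega_{2,\dots,n}$) then yields $f(\omega,\sigma^{\otimes n}) \ge \sum_{i=1}^n f(\omega_i,\sigma)$, with $\omega_i$ the reduced state of $\omega$ on the $i$-th copy. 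Applied to $\omega=\rho'_n$ this gives $\tfrac1n f(\rho'_n,\sigma^{\otimes n}) \ge \tfrac1n\sum_{i=1}^n f(\rho'_{n,i},\sigma)$.

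Then comes the approximation step. Since the partial trace is a quantum channel and the trace norm is monotone under channels, $\norm{\rho'_{n,i}-\rho}_1 \le \norm{\rho'_n-\rho^{\otimes n}}_1 =: \varepsilon_n$ for every $i$, and $\varepsilon_n \to 0$ by hypothesis. Hence $\tfrac1n\sum_{i=1}^n f(\rho'_{n,i},\sigma) \ge m_n := \min\{f(\tau,\sigma) : \tau\ \text{a state},\ \norm{\tau-\rho}_1 \le \varepsilon_n\}$, the minimum being attained because the feasible set is compact and $f(\cdot,\sigma)$ is continuous. Finally, continuity of $f(\cdot,\sigma)$ at $\rho$ forces $m_n \to f(\rho,\sigma)$: given $\delta>0$ there is $\eta>0$ with $|f(\tau,\sigma)-f(\rho,\sigma)|<\delta$ whenever $\norm{\tau-\rho}_1<\eta$, so $m_n > f(\rho,\sigma)-\delta$ once $\varepsilon_n<\eta$. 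Combining these bounds gives $\liminf_{n\to\infty}\tfrac1n f(\rho'_n,\sigma^{\otimes n}) \ge f(\rho,\sigma)$, and hence the claimed lower asymptotic semi-continuity inequality.

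I do not expect a genuine obstacle: the one substantive idea is that the bipartite super-additivity tensorizes to an $n$-partite inequality bounding $f(\rho'_n,\sigma^{\otimes n})$ from below by the sum over its single-copy marginals, after which everything is a compactness-and-uniform-continuity argument. The only points requiring a little care are the uniformity in $i$ of the marginal approximation, which follows from monotonicity of the trace norm under partial trace, and the attainment of $m_n$, which follows from compactness of the trace-norm ball intersected with the state space.
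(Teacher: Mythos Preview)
Your proof is correct and follows essentially the same route as the paper: bound $f(\rho'_n,\sigma^{\otimes n})$ below by $\sum_i f(\rho'_{n,i},\sigma)$ via iterated super-additivity and additivity, control the marginals using monotonicity of the trace norm under partial trace, and conclude by continuity. The only cosmetic differences are that the paper states the tensorized super-additivity in one line rather than via induction, and bounds the average by $\min_i f(\rho'_{n,i},\sigma)$ directly instead of passing through your infimum $m_n$ over the whole $\varepsilon_n$-ball; your version makes the uniformity in $i$ more explicit but is otherwise the same argument.
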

\begin{proof}
    Let $\{\rho'_n\}$ be a sequence of states such that
    $\norm{\rho'_n-\rho^{\otimes n}}_1\rightarrow 0$. Since the trace norm
    fulfills the data-processing inequality, we know that
    $||\rho'_{n,i}-\rho||_1 \rightarrow 0$, where $\rho'_{n,i}$ denotes
    the marginal of $\rho'_n$ on the $i$-th tensor-factor. Hence, the
    marginals converge to $\rho$.
    From the properties of $f$, we furthermore see that
    \begin{align}
        &\frac{1}{n}\left(f(\rho'_n,\sigma^{\otimes n})-f(\rho^{\otimes n},\sigma^{\otimes n}) \right)\\
        &\geq \frac{1}{n}\sum_i\left(f(\rho'_{n,i},\sigma)-f(\rho,\sigma)\right)\\
        &\geq \min_i\{f(\rho'_{n,i},\sigma)\} - f(\rho,\sigma) \overset{n\rightarrow\infty}{\longrightarrow} 0,
    \end{align}
    where the limit follows from continuity and the second line from additivity and super-additivity.
\end{proof}

\section{Uniqueness of the free energy}

The results of the previous section, in particular Theorem \ref{thm:maintheorem}, can be applied to the resource theory of 
$\beta$-athermality. We formulate it as a resource theory of pairs of a quantum state and a Hamiltonian $(\rho,H)$ that 
we call \emph{object}. An object $(\rho,H)$ is said to have the resource of $\beta$-athermality if it fulfills
\begin{equation}
\rho\neq \omega_{\beta,H}
\end{equation}
where $\omega_{\beta,H}$ is the Gibbs state for the Hamiltonian $H$ and inverse temperature $\beta>0$, given by
\begin{equation}\label{eq:Gibbs}
\omega_{\beta,H}:=\frac{e^{-\beta H}}{\tr (e^{-\beta H})}.
\end{equation}
In this way, the resource theory of $\beta$-atermality is concerned with the absence of thermal equilibrium at temperature $1/\beta$ \cite{Brandao2013}. Concerning the set of free operations, we will be considering here the most general set of operations that do not create resourceful states from states featuring fewer resources.

In order to progress, let us first define the so-called \emph{Gibbs-preserving maps (GP)} which are quantum channels that have Gibbs state \eqref{eq:Gibbs} as a fixed point. More formally, a GP-channel is defined as a trace-preserving completely positive map $\mc{G}_{\beta}$
with the property that
\begin{equation}\label{eq:gpchannels}
\mc{G}_{\beta}(\omega_{\beta,H})= \omega_{\beta,H} \;\; \forall \: H.
\end{equation}
Note that formulated as above, GP channels only induce transitions that change the quantum state but not the Hamiltonian. This can be extended by simply considering functions $G$ that act on the object, possibly changing also the Hamiltonian, but which at the same time
 do not create $\beta$-athermality. In this way, we define a GP-map as a function $(\rho,H) \mapsto (\sigma,K)=G_{\beta}(\rho,H)$ such that
\begin{equation}\label{eq:gpmaps}
G_{\beta}(\omega_{\beta,H},H)=(\omega_{\beta,K},K).
\end{equation}
This condition can equivalently be cast into the following form: One may define the set of GP channels
as $\mc{G}_{\beta}^{H} (\omega_{\beta,H})=\omega_{\beta,K(H)}$ for all $H$, and the map between Hamiltonians as
$\bar{\mc{G}}(H)=K$ so that
\begin{equation}\label{eq:gpmapsentries}
G(\rho,H)=(\mc{G}_{\beta}^{H}(\rho),\bar{\mc{G}}(H)).
\end{equation}
With this notation, condition \eqref{eq:gpmaps} is simply given by
\begin{equation}\label{eq:conditiongpmaps}
\mc{G}_{\beta}^{H}(\omega_{\beta,H})=\omega_{\beta,\bar{\mc{G}}(H)}.
\end{equation}
GP-maps $G$ are not only a natural extension of GP-channels $\mc{G}$ for the case where Hamiltonians are modified, but one can also
see that any GP-map can be implemented if one is given access to a GP-channels and an ancillary system in a Gibbs state. This is formalized by the following Lemma taken from Ref.\ \cite{Gallego2015}.

\begin{lemma}[Implementation of GP maps \cite{Gallego2015}]
Any map $G_{\beta}$ fulfilling \eqref{eq:gpmaps} acting on a system $S$  can be implemented by adding an ancillary system $A$ in the Gibbs state $(\omega_{\beta,K},K)$ and applying a GP channel $\mc{G}$ to the entire compound. More formally, we find that
\begin{equation}
G_{\beta}(\rho_S,H_S):=(\sigma,K)=\big(\tr_S (\mc{G}_{\beta}(\rho \otimes \omega_{\beta,K})),K\big).
\end{equation}
\end{lemma}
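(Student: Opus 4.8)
The plan is to exhibit an explicit Gibbs-preserving channel $\mc{G}_\beta$ on the joint system $SA$ and then verify three things: that it is a legitimate quantum channel, that it fixes the joint Gibbs state, and that discarding $S$ after its action reproduces $G_\beta$. To fix notation I would write $G_\beta(\rho,H)=(\mc{G}^H_\beta(\rho),K)$ with $K=\bar{\mc{G}}(H)$ as in \eqref{eq:gpmapsentries}, take the ancilla $A$ to be a copy of the output Hilbert space of $\mc{G}^H_\beta$ equipped with the Hamiltonian $K$, and record that \eqref{eq:conditiongpmaps} says $\mc{G}^H_\beta(\omega_{\beta,H})=\omega_{\beta,K}$. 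Here $\mc{G}^H_\beta$ is itself a channel, as is implicit in taking $G_\beta$ to be a physical free operation.

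The channel I would use is the prepare-and-process map
\begin{equation}
\mc{G}_\beta(\tau_{SA}) := \omega_{\beta,H}\otimes \mc{G}^H_\beta\big(\tr_A \tau_{SA}\big),
\end{equation}
where the output of $\mc{G}^H_\beta$ is placed on the factor $A$ (legitimate since $A$ was chosen isomorphic to that output space) while a freshly prepared $\omega_{\beta,H}$ occupies the factor $S$. Complete positivity and trace preservation are immediate: $\mc{G}_\beta$ is the composition of the partial trace over $A$, the channel $\mc{G}^H_\beta$, and appending the fixed state $\omega_{\beta,H}$, each of which is CPTP.

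It then remains to check the two defining features. Evaluating on the relevant input, $\tr_A(\rho\otimes\omega_{\beta,K})=\rho$, so $\mc{G}_\beta(\rho\otimes\omega_{\beta,K})=\omega_{\beta,H}\otimes\mc{G}^H_\beta(\rho)$, and tracing out $S$ returns $\mc{G}^H_\beta(\rho)=\sigma$, which is the claim. For the Gibbs-preserving property, evaluating on the joint Gibbs state $\omega_{\beta,H}\otimes\omega_{\beta,K}$ (the Gibbs state of the compound Hamiltonian $H\otimes\id+\id\otimes K$), the partial trace over $A$ leaves $\omega_{\beta,H}$, which \eqref{eq:conditiongpmaps} sends to $\omega_{\beta,K}$ on $A$; hence $\mc{G}_\beta(\omega_{\beta,H}\otimes\omega_{\beta,K})=\omega_{\beta,H}\otimes\omega_{\beta,K}$, i.e.\ $\mc{G}_\beta$ fixes the joint Gibbs state.

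The step I expect to need the most care is making precise what ``Gibbs-preserving on the compound $SA$'' should mean in the presence of the Hamiltonian change $H\mapsto K$: one must fix the ancilla Hamiltonian to be the output Hamiltonian $K$, note that the joint Gibbs state then factorizes as $\omega_{\beta,H}\otimes\omega_{\beta,K}$, and keep the dimension bookkeeping (the identification of the output space of $\mc{G}^H_\beta$ with $A$) consistent throughout. Given these conventions, the verification above is short; alternative implementations, e.g.\ swapping $S$ and $A$ after processing, work equally well but require the same care.
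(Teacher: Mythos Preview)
The paper does not give its own proof of this lemma; it simply quotes the statement from Ref.~\cite{Gallego2015}. Your construction is therefore not being compared against anything in the present paper, but it is a correct and standard way to realise the claim. The map $\mc{G}_\beta(\tau_{SA})=\omega_{\beta,H}\otimes\mc{G}^H_\beta(\tr_A\tau_{SA})$ is manifestly CPTP, has the joint Gibbs state $\omega_{\beta,H}\otimes\omega_{\beta,K}$ as a fixed point by \eqref{eq:conditiongpmaps}, and reproduces $\sigma=\mc{G}^H_\beta(\rho)$ on $A$ after tracing out $S$; your bookkeeping of the output Hilbert space as the carrier of $A$ is exactly the point that needs to be made explicit, and you handle it correctly.

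One minor remark: the paper's phrasing of a GP channel in \eqref{eq:gpchannels} with ``$\forall H$'' could be read as demanding that a single channel fix \emph{every} Gibbs state, which your $\mc{G}_\beta$ does not do (it depends on $H$ through both $\omega_{\beta,H}$ and $\mc{G}^H_\beta$). In the context of the lemma, however, the compound has a fixed Hamiltonian $H_S+K_A$, and ``GP channel on the compound'' should be understood as preserving the Gibbs state of that Hamiltonian, which is precisely what you verify. It would be worth flagging this reading explicitly so the reader is not confused by the quantifier in \eqref{eq:gpchannels}.
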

Once we have established the set of GP maps for the objects,
we will now introduce the notion of catalyst in this framework. This is done analogously to the case of catalysts for other sets of operations such as \emph{thermal operations} \cite{Horodecki2013,Brandao2015,Lostaglio15,Ng15}. In the following, we will also frequently drop the $\beta$-subscript from GP-maps for simplicity of notation.

\subsection{Catalysts and correlations}
We will now turn to defining the transitions between objects that can be performed with GP maps and the use of what is called a ``catalyst''
in this context. This is simply an ancillary quantum system that is left in the same state (in a sense that will be made precise later) after the transition is performed, rendering the metaphor of an actual catalyst quite appropriate.

\begin{definition}[Catalytic free transition]\label{def:catalyticfreetransition}
We say that the transition
\begin{equation}
(\rho_S,H_S) \rightarrow (\sigma_S,K_S)
\end{equation}
is a \emph{catalytic free transition} if there exist a GP map $G$ and a system $A$ described by the object $(\gamma_A,R_A)$ such that
\begin{equation}
G\big( (\rho_S,H_S) \otimes (\gamma_A,R_A) \big)= (\sigma_S,K_S) \otimes (\gamma_A,R_A).
\end{equation}
We will in this case simply denote it by
\begin{equation}
(\rho_S,H_S)\stackrel{\text{c}}{>} (\sigma_S,K_S).
\end{equation}
\end{definition}
Here, we are employing the convenient notation
\begin{equation}
 (\rho_S,H_S) \otimes (\gamma_A,R_A) :=(\rho_S \otimes \gamma_A , H_S \otimes \id_A+ \id_S \otimes R_A)
 \end{equation}
 to describe tensor products of objects. In the remainder of this work, we will simply write $H_S \otimes \id_A+ \id_S \otimes R_A:=H_S+R_A$. Importantly, we are assuming that the catalyst $A$ is left in the same state and Hamiltonian and also uncorrelated from $S$. The role of the correlations of the catalysts and its role in quantum thermodynamics has been first noted in Ref.\ \cite{Lostaglio15}. There, one considers a catalysts consisting of $k$ subsystems and one merely
 demands that the marginal state of each subsystem is left untouched. We define it here formally for the case of GP maps.
\begin{definition}[Marginal-catalytic free transition \cite{Lostaglio15}]\label{def:marginal-catalyticfreetransition}
We say that the transition
\begin{equation}
(\rho_S,H_S) \rightarrow (\sigma_S,K_S)
\end{equation}
is a \emph{marginal-catalytic free transition} if there exist a GP map $G$ and systems $A_1,\ldots,A_k$ described by the object $(\gamma_A,R_A)=\bigotimes_{i=1}^k (\gamma^i,R^i)$ such that
\begin{equation}
G\big( (\rho_S,H_S) \otimes (\gamma_A,R_A) \big)= (\sigma_S,K_S) \otimes (\tilde{\gamma}_A,R_A),
\end{equation}
where $\tr_{|A_i}(\tilde{\gamma}_A)=\tr_{|A_i}(\gamma_A)$ for all $i \in (1,\ldots,k)$. We will in this case simply denote it by
\begin{equation}
(\rho_S,H_S)\stackrel{\text{mc}}{>} (\sigma_S,K_S).
\end{equation}
\end{definition}
Note that in this case the system $A$ does not remain unchanged, but only its local marginals. In this sense, it is not truly a catalyst,
but a catalyst on its reduced states. It is natural to expect that this indeed allows for a larger set of transitions, since the system $A$ is ``used up'' by employing the initial lack of correlations as a resource.

We will now consider a family of transitions that also introduces correlations, but for which the catalyst is, unlike in Definition \ref{def:marginal-catalyticfreetransition}, left entirely untouched. In this case, correlations are built up between the system and
the catalyst. In this way, the catalyst is re-usable as long as it is employed in order to implement a transition on a new system. We call this transitions, originally introduced in Ref.\ \cite{Gallego2015}, \emph{correlated-catalytic free transitions}:

\begin{definition}[Correlated-catalytic free transition]\label{def:correlated-catalyticfreetransition}
We say that the transition
\begin{equation}
(\rho_S,H_S) \rightarrow (\sigma_S,K_S)
\end{equation}
is a \emph{correlated-catalytic free transition} if there exist a GP map $G$ and a system $A$ described by the object $(\gamma_A,R_A)$ such that
\begin{equation}
G\big( (\rho_S,H_S) \otimes (\gamma_A,R_A) \big)= (\eta,K_S+ R_A),
\end{equation}
where $\tr_{A}(\eta)=\sigma_S$ and $\tr_{S}(\eta)=\gamma_A$. We will in this case simply denote it by
\begin{equation}
(\rho_S,H_S)\stackrel{\text{cc}}{>} (\sigma_S,K_S).
\end{equation}
\end{definition}
 We will now show that the non-equilibrium free energy is the only function, under reasonable assumptions, that does not increase under operations of the form of Definitions \ref{def:marginal-catalyticfreetransition} and \ref{def:correlated-catalyticfreetransition}.

\subsection{Free energy as a unique measure of non-equilibrium}
We will call a \emph{measure of non-equilibrium} a function that quantifies how far a given object $(\rho,H)$ is from its equilibrium object $(\omega_{\beta,H},H)$. The minimal requirement on such a measure is that it is non-increasing under free transitions. The larger the set of free transitions, the more restricted is the allowed set of measures. One of the most well-studied measures of non-equilibrium is based on the quantum relative entropy. It is related to the free energy as
\begin{equation}
\Delta F_{\beta}(\rho,H):= \frac{1}{\beta}S(\rho\|\omega_{\beta,H}) =F_{\beta}(\rho,H)- F_{\beta}(\omega_{\beta,H},H)
\end{equation}
where $F_{\beta}(\rho,H)=\tr(\rho H) -\beta^{-1} S(\rho)$ with $S$ being the von Neumann (and not the relative) entropy. The measure $\Delta F_{\beta}$ fulfills  the following properties that we express here for a generic measure denoted by $M_{\beta}$:
\begin{enumerate}[I]
\item \label{prop:fe:continuity} \emph{Continuity}: For fixed Hamiltonian $H$, the map $\rho\mapsto M_{\beta}(\rho,H)$ is continuous.
\item \label{prop:fe:additivity} \emph{Additivity}:
\begin{equation}
M_{\beta}(\rho_1\otimes\rho_2, H_1+ H_2) = M_{\beta}(\rho_1,H_1) +M_{\beta}(\rho_2,H_2).\nonumber
\end{equation}
\item \label{prop:fe:dpi} \emph{Monotonicity}:
\begin{enumerate}
\item \emph{Monotonicity: }\\ \label{prop:monotonicity}
$M_{\beta}(\rho,H)\geq M_{\beta}(\sigma,K)$ if $G(\rho,H) = (\sigma,K)$.
\item \emph{Catalytic monotonicity: }\\ \label{prop:catalyticmonotonicity}
$M_{\beta}(\rho,H)\geq M_{\beta}(\sigma,K)$ if $(\rho,H) \stackrel{\text{c}}{>} (\sigma,K)$.
\item \emph{Marginal-catalytic monotonicity: }\\ \label{prop:mcatalyticmonotonicity}
$M_{\beta}(\rho,H)\geq M_{\beta}(\sigma,K)$ if $(\rho,H) \stackrel{\text{mc}}{>} (\sigma,K)$.
\item \emph{Correlated-catalytic monotonicity: }\\ \label{prop:ccatalyticmonotonicity}
$M_{\beta}(\rho,H)\geq M_{\beta}(\sigma,K)$ if $(\rho,H) \stackrel{\text{cc}}{>} (\sigma,K)$.
\end{enumerate}
\end{enumerate}
All those properties apply for all states and Hamiltonians involved. The fact that $\Delta F_{\beta}$ fulfills \ref{prop:fe:continuity} and \ref{prop:fe:additivity} follows from the continuity and additivity properties
of the quantum relative entropy. The other properties can be related to the data processing inequality and super-additivity as we will see in Theorem \ref{thm:equivalencemaintext}. Before that, let us note that for any function $M_{\beta}$ on objects, we can define a function $\mc{M}_{\beta}$ on pairs of quantum states as $\mc{M}_{\beta}(\rho,\omega_{\beta,H})=M_\beta(\rho,H)$. At the same time, it is true that
any full-rank state $\sigma$ can be thought of as the Gibbs-state of the modular Hamiltonian
\begin{align}\label{eq:modular_hamiltonian}
H_\sigma := -\frac{1}{\beta} \log \sigma+ C,
\end{align}
{for any $C\in \RR$}. With this notation, all objects of the form $(\sigma,H_\sigma)$ are Gibbs-objects. {Importantly, the modular Hamiltonian $H_\sigma$ is only defined up to an additive constant. It turns out however, that the properties \ref{prop:fe:additivity} and \ref{prop:fe:dpi} imply that $M_\beta(\rho,H) = M_\beta(\rho,H+C\one)$ for any $C\in \RR$ (see Appendix \ref{sec:appgauge} for a proof). Any additive measure of athermality is hence automatically gauge-invariant in this sense.}

Thus, the functions $\mc{M}_{\beta}$ and $M_{\beta}$ are in a one-to-one correspondence. With this equivalence, we say that a measure $\mc{M}_{\beta}$ is super-additive if, for any bipartite quantum states,
it fulfills
\begin{equation}\label{eq:superadditiveM}
\mc{M}_{\beta}(\rho_{1,2},\sigma_1\otimes \sigma_2)\geq \mc{M}_{\beta}(\tr_2(\rho)\otimes \tr_1(\rho),\sigma_1\otimes \sigma_2)
\end{equation}
and additive if it fulfills
\begin{equation}\label{eq:additiveM}
\mc{M}_{\beta}(\rho_1\otimes \rho_2,\sigma_1\otimes \sigma_2)= \mc{M}_{\beta}(\rho_1,\sigma_1) + \mc{M}_{\beta}(\rho_2,\sigma_2).
\end{equation}
Also, $\mc{M}_{\beta}$ is said to fulfill the data processing inequality if
\begin{equation}\label{eq:dataprocessingM}
\mc{M}_{\beta}(T(\rho),T(\sigma))\leq \mc{M}_{\beta}(\rho,\sigma)
\end{equation}
for all  $\rho$, all $\sigma$ being full-rank and for all quantum channels $T$.

At this point, a note of caution is appropriate. We have previously defined the function $\mc M_\beta$ only in the specific case where the second argument has full rank.
There clearly are quantum-channels $T$ that reduce the rank of full-rank states,
in which case $\mc M_\beta(T(\rho),T(\sigma))$ may at first seem undefined. This is not a problem, however. To see this, we make use of the following fact about quantum channels:
\begin{lemma}[Rank-decreasing quantum channels]\label{lemma:rank}
Let $T:\mc{B}(\mc H)\rightarrow \mc{B}(\mc H')$ be a quantum channel and $\sigma$ any full-rank state. If $T(\sigma)$ is only supported on a subspace $P\subseteq \mc H'$, then $T(\rho)$ is supported only within $P$ for any $\rho$.
\begin{proof}
The proof is given in the appendix.
\end{proof}
\end{lemma}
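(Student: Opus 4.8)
The plan is to exploit the fact that a full-rank state dominates every other state in the positive-semidefinite order up to a scalar. Concretely, since $\sigma$ has full rank, there is a constant $\lambda>0$ such that $\lambda\sigma \geq \rho$ as operators, for any state $\rho$ on $\mc H$; one may take $\lambda = 1/\lambda_{\min}(\sigma)$, where $\lambda_{\min}(\sigma)>0$ is the smallest eigenvalue of $\sigma$. Indeed, $\lambda\sigma - \rho = \lambda\sigma - \rho \geq \lambda_{\min}(\sigma)^{-1}\lambda_{\min}(\sigma)\,\id - \id \cdot 0 \geq 0$ follows because $\lambda\sigma \geq \lambda\,\lambda_{\min}(\sigma)\id = \id \geq \rho$.

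Next I would apply the channel $T$ and use that completely positive maps are in particular positive, hence monotone with respect to the operator order: from $\lambda\sigma - \rho \geq 0$ we get $T(\lambda\sigma - \rho) = \lambda T(\sigma) - T(\rho) \geq 0$, i.e. $T(\rho) \leq \lambda T(\sigma)$. Now let $P$ be the support projector of $T(\sigma)$ and $P^{\perp} = \id - P$ its orthogonal complement. By assumption $P^{\perp} T(\sigma) P^{\perp} = 0$, so $0 \leq P^{\perp} T(\rho) P^{\perp} \leq \lambda\, P^{\perp} T(\sigma) P^{\perp} = 0$, hence $P^{\perp} T(\rho) P^{\perp} = 0$. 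Since $T(\rho)$ is positive semidefinite, $P^{\perp} T(\rho) P^{\perp} = 0$ forces $P^{\perp} T(\rho) = 0$ (a positive operator whose compression to a subspace vanishes annihilates that subspace: writing $T(\rho) = A^\dagger A$, we get $\|A P^{\perp} v\|^2 = \langle v | P^\perp T(\rho) P^\perp | v\rangle = 0$ for all $v$, so $A P^\perp = 0$ and thus $P^\perp T(\rho) = P^\perp A^\dagger A = (A P^\perp)^\dagger A = 0$). Therefore $T(\rho)$ is supported within $P$, which is the claim.

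There is no real obstacle here; the only point requiring a little care is the last step, the passage from "the compression $P^{\perp} T(\rho) P^{\perp}$ vanishes" to "$P^{\perp} T(\rho) = 0$", which uses positivity of $T(\rho)$ in an essential way (it is false for general Hermitian operators). Everything else is the standard observation that a full-rank state is an order unit up to scaling, together with positivity — not even complete positivity — of the channel. One could alternatively phrase the argument via the Kraus decomposition $T(\cdot) = \sum_k E_k \cdot E_k^\dagger$, noting that $T(\sigma)$ full-rank-dominated means each $E_k\sigma E_k^\dagger$ is supported in $P$, hence each $E_k$ maps into $P$ (again using that $\sigma$ is full rank, so $\mathrm{ran}(E_k) = \mathrm{ran}(E_k\sigma E_k^\dagger)$), and therefore $T(\rho) = \sum_k E_k \rho E_k^\dagger$ is supported in $P$ for every $\rho$.
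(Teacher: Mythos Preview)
Your proof is correct and considerably more streamlined than the one in the paper. The paper argues by decomposing $\rho=d+r$ into its diagonal and off-diagonal parts in the eigenbasis of $\sigma$, first observing that each $T(\proj{i})$ is supported in $P$ (so $T(d)$ is), and then using trace-preservation together with positivity of $QT(\rho)Q$ to force $QT(r)Q=0$; only at the very end does it invoke the same positivity step you use to pass from $QT(\rho)Q=0$ to $T(\rho)=PT(\rho)P$. Your route via the order-unit observation $\rho\leq \lambda\sigma$ bypasses the diagonal/off-diagonal split and the trace bookkeeping entirely, and in fact shows slightly more: you use only positivity of $T$, never trace-preservation or complete positivity, so the statement holds for arbitrary positive linear maps. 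The Kraus-decomposition alternative you sketch is also valid and is perhaps the most transparent explanation of \emph{why} the result holds (the range of every Kraus operator must already lie in $P$), at the modest cost of invoking complete positivity.
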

By the previous lemma, we see that any quantum channel that maps a full-rank state $\sigma$ into a state $T(\sigma)$ without full rank simply maps all states to the smaller Hilbert space $P=\mathrm{supp}(T(\sigma))$ and should be considered as a map from states on $\mc H$ to states on $P$ instead. Since the function $\mc{M}_\beta$ is defined on all finite-dimensional Hilbert spaces, we can simply assume that it acts on $\mc{B}(P)\times \mc{B}(P)$ in this case. In yet other words, the function $\mc M_\beta(\rho,\sigma)$ is always defined if $\mathrm{supp}(\rho)\subseteq \mathrm{supp}(\sigma)$ by restricting it to $\mathrm{supp}(\sigma)$,
\begin{equation}
\mc M_\beta(\rho,\sigma) = \mc M_\beta(\rho|_{\mathrm{supp}(\sigma)},\sigma|_{\mathrm{supp}(\sigma)}).
\end{equation}
We can then show the following equivalence between properties of $M_\beta$ and $\mc M_\beta$.

\begin{theorem}[Equivalence of $M_\beta$ and $\mc M_\beta$]\label{thm:equivalencemaintext} There exist the following two equivalences between the properties of the measures of athermality $M_{\beta}(\rho,H)$ and the corresponding function $\mc{M}_{\beta}(\rho,\sigma)$.
\begin{itemize}
\item The measure $M_{\beta}$ fulfills additivity \ref{prop:fe:additivity} and marginal-catalytic monotonicity \ref{prop:mcatalyticmonotonicity} $\iff$ $\mc{M}_\beta$ is super-additive \ref{eq:superadditiveM}, additive \ref{eq:additiveM} and fulfills the data-processing inequality \ref{eq:dataprocessingM}.
\item The measure $M_{\beta}$ fulfills Additivity \ref{prop:additivity} and correlated-catalytic monotonicity \ref{prop:ccatalyticmonotonicity} $\iff$ $\mc{M}_\beta$ is super-additive \ref{eq:superadditiveM}, additive \ref{eq:additiveM} and fulfills the data-processing inequality \ref{eq:dataprocessingM}.
\end{itemize}
\end{theorem}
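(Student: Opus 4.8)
The strategy is to translate each property of $M_\beta$ into the matching property of $\mc{M}_\beta$ (and back) through the correspondence $(\rho,H)\leftrightarrow(\rho,\omega_{\beta,H})$, using two elementary facts: every full-rank $\sigma$ is the Gibbs state of its modular Hamiltonian $H_\sigma=-\beta^{-1}\log\sigma$, so that $H\mapsto\omega_{\beta,H}$ and $\sigma\mapsto H_\sigma$ are mutually inverse up to the additive-constant gauge; and $\omega_{\beta,H_1+H_2}=\omega_{\beta,H_1}\otimes\omega_{\beta,H_2}$, since $e^{-\beta(H_1\otimes\one+\one\otimes H_2)}=e^{-\beta H_1}\otimes e^{-\beta H_2}$. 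Together with the gauge-invariance $M_\beta(\rho,H)=M_\beta(\rho,H+C\one)$ of Appendix~\ref{sec:appgauge}, these show that $\mc{M}_\beta(\rho,\sigma):=M_\beta(\rho,H_\sigma)$ is well defined, that $M_\beta$ and $\mc{M}_\beta$ determine one another, and that additivity~\ref{prop:fe:additivity} of $M_\beta$ \emph{is} additivity~\eqref{eq:additiveM} of $\mc{M}_\beta$ --- which already settles the ``additive'' clause on both sides of both equivalences. Throughout I treat $M_\beta$, hence $\mc{M}_\beta$, as finite-valued, so that a term $M_\beta(\gamma,R)$ can be cancelled from both sides of an inequality.

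For the direction ``$\mc{M}_\beta$ super-additive, additive and data-processing $\Rightarrow$ $M_\beta$ additive and correlated-catalytically monotone'', suppose $(\rho_S,H_S)\stackrel{\text{cc}}{>}(\sigma_S,K_S)$ is witnessed by a GP map whose channel part $T$ acts on the compound $S\otimes A$ with catalyst $(\gamma_A,R_A)$, so that $T(\rho_S\otimes\gamma_A)=\eta$ with $\tr_A(\eta)=\sigma_S$, $\tr_S(\eta)=\gamma_A$ and, by the Gibbs-preservation condition~\eqref{eq:conditiongpmaps} together with the factorization fact, $T(\omega_{\beta,H_S}\otimes\omega_{\beta,R_A})=\omega_{\beta,K_S}\otimes\omega_{\beta,R_A}$. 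Applying, in order, additivity (to write $M_\beta(\rho_S,H_S)+M_\beta(\gamma_A,R_A)=\mc{M}_\beta(\rho_S\otimes\gamma_A,\omega_{\beta,H_S}\otimes\omega_{\beta,R_A})$), the data-processing inequality for $T$, super-additivity~\eqref{eq:superadditiveM} (to pass from $(\eta,\omega_{\beta,K_S}\otimes\omega_{\beta,R_A})$ to $(\sigma_S\otimes\gamma_A,\omega_{\beta,K_S}\otimes\omega_{\beta,R_A})$), and additivity once more, yields $M_\beta(\rho_S,H_S)+M_\beta(\gamma_A,R_A)\geq M_\beta(\sigma_S,K_S)+M_\beta(\gamma_A,R_A)$; cancelling the catalyst term gives the claim. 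The marginal-catalytic case is identical with catalyst $\bigotimes_{i=1}^k(\gamma^i,R^i)$, using super-additivity iteratively (peeling off one catalyst factor at a time, each step being an instance of the bipartite inequality~\eqref{eq:superadditiveM} after grouping the remaining factors) and cancelling $\sum_i M_\beta(\gamma^i,R^i)$.

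For the converse, additivity is once more the dictionary. The data-processing inequality for $\mc{M}_\beta$ follows because taking a trivial one-dimensional catalyst reduces either form of catalytic monotonicity to ordinary monotonicity $M_\beta(\rho,H)\geq M_\beta(\mc{G}_\beta^H(\rho),\bar{\mc{G}}(H))$ under GP maps: given any channel $T$ and any full-rank $\sigma$ set $H=H_\sigma$, and --- invoking Lemma~\ref{lemma:rank} to regard $T$ as a channel onto $\mathrm{supp}(T(\sigma))$ if $T(\sigma)$ is rank-deficient --- set $K=H_{T(\sigma)}$ (restricted to that support); then $(\rho,H)\mapsto(T(\rho),K)$ is a GP map and monotonicity reads exactly $\mc{M}_\beta(T(\rho),T(\sigma))\leq\mc{M}_\beta(\rho,\sigma)$. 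The step that genuinely uses the catalyst, and which I expect to be the main obstacle, is super-additivity~\eqref{eq:superadditiveM}: I would exhibit, for any bipartite $\rho_{12}$ with marginals $\rho_1,\rho_2$, both a $\stackrel{\text{mc}}{>}$ and a $\stackrel{\text{cc}}{>}$ transition from $(\rho_{12},H_1+H_2)$ to $(\rho_1\otimes\rho_2,H_1+H_2)$. For the correlated-catalytic one, take the catalyst to be a single copy $(\rho_2,H_2)$ of subsystem $2$ and let the GP channel be the swap of that copy with subsystem $2$: it fixes the Hamiltonian and preserves the relevant product Gibbs state because its last two tensor factors are equal, it outputs the system in $\rho_1\otimes\rho_2$, and it returns the catalyst marginal $\rho_2$ intact (the catalyst being left correlated with subsystem $1$, which Definition~\ref{def:correlated-catalyticfreetransition} permits). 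For the marginal-catalytic one, take the catalyst to be copies $(\rho_1,H_1)$ and $(\rho_2,H_2)$ of both subsystems and swap each of $1,2$ with its copy, so that the system ends uncorrelated from the catalyst in $\rho_1\otimes\rho_2$ while each catalyst marginal is preserved. Catalytic monotonicity then gives $M_\beta(\rho_{12},H_1+H_2)\geq M_\beta(\rho_1\otimes\rho_2,H_1+H_2)=M_\beta(\rho_1,H_1)+M_\beta(\rho_2,H_2)$, which is precisely~\eqref{eq:superadditiveM}. Since both equivalences have the same right-hand side, a by-product is that, given additivity, marginal-catalytic and correlated-catalytic monotonicity coincide --- as the two swap constructions already suggest.
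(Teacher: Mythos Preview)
Your proof is correct and follows essentially the same approach as the paper's: the paper also derives DPI from monotonicity via the modular-Hamiltonian/GP-map dictionary (invoking Lemma~\ref{lemma:rank} for rank-decreasing channels), obtains super-additivity from the swap trick (catalyst $\rho_1\otimes\rho_2$ for marginal-catalytic, catalyst $\rho_2$ for correlated-catalytic), and proves the reverse implications by the same additivity--DPI--super-additivity--cancel chain you spell out. The only cosmetic difference is that the paper packages the argument as a hierarchy of four equivalences (plain/catalytic/marginal-catalytic/correlated-catalytic monotonicity), proving each increment separately, whereas you argue both bullets directly.
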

The proof of this theorem, together with a more detailed set of the implications between the properties of $M_{\beta}$ and the corresponding function $\mc{M}_{\beta}$, is provided in the appendix.

The previous theorem simply tells us that any additive measure of athermality $M_{\beta}$ that does not increase under marginal-catalytic operations (Definition \ref{def:marginal-catalyticfreetransition}) or also under correlated-catalytic operations (Definition \ref{def:correlated-catalyticfreetransition}) is in one to one correspondence with a function $\mc{M}_{\beta}$ that is additive, super-additive and fulfills the data processing inequality. This has as a first consequence that the measure $\Delta F_{\beta}$ fulfills indeed Properties \ref{prop:fe:continuity}-\ref{prop:fe:dpi}. More importantly, using our re-formulation of Matsumoto's result of Theorem \ref{thm:maintheorem}, we can show that $\Delta F_{\beta}$ is, up to a constant factor, the \emph{only} measure of athermality that fulfills \ref{prop:fe:continuity}-\ref{prop:fe:dpi}. This is the content of our main result, which follows from Theorem \ref{thm:equivalencemaintext}.

\begin{result}[Uniqueness of monotones]
Any monotone for marginal-catalytic transitions or correlated-catalytic transitions at environment temperature $\beta$ that is additive and depends continuously on the density matrix is proportional to $\Delta F_\beta$.
\end{result}

The implications of this result are that the free energy difference $\Delta F_\beta$ is the only \emph{bona fide} quantifier of athermality under the most general set of free operations that do not create the resource.

\section{Discussion and Outlook}
In this work, we have investigated the question which properties uniquely determine the quantum relative entropy among all function on pairs of quantum states. Our re-formulation  of Matsumoto's result highlights the role of super-additivity as a key property in the axiomatic derivation of the quantum relative entropy. The role of super-additivity in the arena of quantum thermodynamics has been
shown to be related to the build up of correlations between the system at hand and a catalyst, which in turn represents the components of the machine that come back to their initial state after the cyclic process.

We have shown how the relative entropy and non-equilibrium free energy uniquely emerge from considerations about how to treat catalysts and their correlations in the resource theoretic approach to quantum thermodynamics. Usually, notions of relative entropy are employed to capture asymptotic weakly correlated settings (thermodynamic limit), thus when acting on many uncorrelated copies of a system (see Ref.\ \cite{Sparaciari2016} for a recent discussion of asymptotic thermodynamics from the point of view of resource theories).
Importantly, and in contrast, in our approach they emerge without having to invoke
any thermodynamic limit, but rather follow from properties of monotones in the single-shot setting. However, they are precisely singled out by the fact that we disregard correlations in the setting of marginal-catalytic and correlated-catalytic free transitions. It thus seems that the crucial feature of for the emergence of free energy is thus the disregarding of correlations. Note that this fits well to how these quantities appear in notions of macroscopic thermodynamics: Macroscopic equilibrium thermodynamics usually emerges in large systems which are well within thermodynamic phases. In such phases, correlations decay exponentially in space. Hence, the correlations of an object with its surrounding scale like its surface-area and not like their volume. Macroscopic objects are then essentially uncorrelated with other objects due to their small surface-to-volume ratio.

In this work,  we have distinguished two ways of creating correlations with the catalyst, the marginal-catalytic one of Definition \ref{def:marginal-catalyticfreetransition} and the correlated-catalytic one of Definition \ref{def:correlated-catalyticfreetransition}. The first represents the situation where the components of the machine become correlated among themselves, while the second represents the case where the machine builds correlation with the system upon which the machine induces a transition. Although these two sets both give rise to the free energy difference as a unique measure of athermality, we consider the latter as a much more adequate set of operations to incorporate correlations in thermodynamics. The reason for this is that the correlations build up between the catalyst and the system do not prevent one from re-using the catalyst to implement again a transition of the same kind on another system. It is an additional contribution of this
work to flesh out this difference.

We end the discussion by posing an interesting open question. This is to investigate how to characterize all the possible thermodynamic transitions that can be implemented with correlated-catalysts. In Ref.\ \cite{Gallego2015} we have seen that indeed the operations of Definition \ref{def:correlated-catalyticfreetransition} are more powerful than the ones of \ref{def:catalyticfreetransition}. At the same time, it has been recently been shown in Ref.\ \cite{Sparaciari2017} that a variant of Definition \ref{def:correlated-catalyticfreetransition} allows to extract work from passive states. The question remains whether all the transitions that do not increase the free energy difference $\Delta F_{\beta}$ are possible, as they indeed are for the ones of Definition \ref{def:marginal-catalyticfreetransition}, as shown in Ref.\ \cite{Lostaglio15}. If this is indeed true also for correlated-catalysts, one would have found an interpretation of the free energy as a unique criterium for the second law of thermodynamics. If it is not true, then it is necessary to consider genuinely new monotones, which are not additive or not continuous. Both options would be interesting from the perspective of the further development of quantum thermodynamics. 

{\bf{Acknowledgments}}: We acknowledge funding from the DFG (GA 2184/2-1), the BMBF, the
EU (COST, AQuS), and the ERC (TAQ) and the Studienstiftung des Deutschen Volkes.


%

\newpage
\appendix

\section{Gauge invariance of $M_\beta$}\label{sec:appgauge}
Here, we show that any measure of athermality fulfilling properties \ref{prop:fe:continuity} -- \ref{prop:fe:dpi} is gauge-invariant in the sense that $M_\beta(\rho,H) = M_\beta(\rho,H+C\one)$ for all $c\in\RR$. To see this, first note that since tracing out and adding a thermal ancilla are free transitions, $M_\beta(\omega_{\beta,H},H)=0$ for any $H$.

A simple calculation using additivity then also shows gauge-invariance:
\begin{align}
  M_\beta(\rho,H+C\one) &= M_\beta\left((\rho,H+C\one)\otimes(\omega_{\beta,K},K)\right)\nonumber\\
                        &= M_\beta\left(\rho\otimes \omega_{\beta,K},H\otimes \one + C\one\otimes\one + \one\otimes K\right)\nonumber\\
                        &=M_\beta\left((\rho,H)\otimes(\omega_{\beta,K},K+C\one)\right)\nonumber\\
                        &=M_\beta\left((\rho,H)\otimes(\omega_{\beta,K+C\one},K+C\one)\right)\nonumber\\
                        &= M_\beta(\rho,H),\nonumber
\end{align}
where we made use of the gauge invariance of Gibbs states.

\section{Rank-decreasing quantum channels}
In this appendix we prove the validity of Lemma~\ref{lemma:rank} of the main-text. We have to show that given a channel $T:\mc B(\mc H)\rightarrow \mc B(\mc H')$ and a full-rank state $\sigma$ such that $\mathrm{supp}(T(\sigma))\subset P$, we also have $\mathrm{supp}(T(\rho))\subset P$ for all states $\rho$. Here, $P$ is an arbitrary subspace of the total Hilbert space $\mc H'$.
Let $\sigma = \sum_i q_i \proj{i}$ be the eigen-decomposition of $\sigma$. Since $T$ maps positive operators to positive operators, and the support of the sum of positive operators is the union of the supports of the operators we conclude that $T(\proj{i})$ is supported in $P$ for all $i$. We thus only need to show that also operators of the form $T(\ketbra{i}{j})$ are supported on $P$.
Now consider any density operator $\rho = d+r$ where $d$ is the diagonal part of $\rho$ (in the eigenbasis of $\sigma$) and $r$ the rest. We know that $\tr(T(d)) = 1$ since $T$ is trace-preserving. Hence $\tr(T(r))=0$. Let us now assume (to arrive at a contradiction)
that $T(r)$ has support within the subspace $Q=\one-P$. Since $T$ maps positive operators to positive operators,
\begin{align}
0 \leq Q T(\rho) Q  = Q T(r) Q.
\end{align}
Thus we conclude on the one hand that $Q T(r) Q\geq 0$. On the other hand, we know that
\begin{align}
1=\tr(T(\rho)) \geq \tr(P T(\rho)) = 1 + \tr(PT(r)).
\end{align}
Hence, $\tr(P(T(r)))=0$. Since $T$ is trace-preserving we also have
\begin{align}
\tr(PT(r))= - \tr(Q T(r)) =0.
\end{align}
 Hence $Q T(r)Q=0$ and also $Q T(\rho) Q=0$. By positivity and Hermiticity of $T(\rho)$ we also get $PT(\rho)Q=0$ and $Q T(\rho)P=0$. We thus conclude that $T(\rho)=PT(\rho)P$, which finishes the proof.

\section{Proof of Theorem \ref{thm:equivalencemaintext} and other equivalences}

We will show a more complete set of equivalences than the ones of Theorem \ref{thm:equivalencemaintext}, which corresponds simply to iii)
and iv).

\begin{lemma}[Alternative equivalences]
The following properties are equivalent:
\begin{enumerate}
\item [i)]$M_{\beta}$ fulfills monotonicity \ref{prop:monotonicity} $\iff$ $\mc{M}_{\beta}$ fulfills the data-processing inequality (DPI) \eqref{eq:dataprocessingM}.
\item[ii)] $M_{\beta}$ fulfills catalytic monotonicity \ref{prop:catalyticmonotonicity} and additivity \ref{prop:fe:additivity} $\iff$ $\mc{M}_{\beta}$ fulfills additivity \eqref{eq:additiveM} and the DPI \eqref{eq:dataprocessingM}.
\item[iii)] $M_{\beta}$ fulfills marginal-catalytic monotonicity \ref{prop:mcatalyticmonotonicity} and additivity \ref{prop:fe:additivity} $\iff$ $\mc{M}_{\beta}$ fulfills super-additivity \eqref{eq:superadditiveM}, additivity \eqref{eq:additiveM} and the DPI \eqref{eq:dataprocessingM}.
\item[iv)] $M_{\beta}$ fulfills correlated-catalytic monotonicity \ref{prop:ccatalyticmonotonicity} and additivity \ref{prop:fe:additivity} $\iff$ $\mc{M}_{\beta}$ fulfills super-additivity \eqref{eq:superadditiveM}, additivity \eqref{eq:additiveM} and the DPI \eqref{eq:dataprocessingM}.
\end{enumerate}
\end{lemma}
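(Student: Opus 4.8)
The plan is to reduce all four equivalences to part~i), using the observation that, after identifying a full-rank state $\sigma$ with the Gibbs object $(\sigma,H_\sigma)$ of its modular Hamiltonian $H_\sigma=-\tfrac1\beta\log\sigma$, the Gibbs-preserving maps are, on the level of states, exactly the quantum channels. On the one hand, a GP-map $G$ with $G(\rho,H)=(\sigma,K)$ has a channel part $\mc{G}_\beta^{H}$ satisfying $\mc{G}_\beta^{H}(\omega_{\beta,H})=\omega_{\beta,K}$, so that applying the DPI \eqref{eq:dataprocessingM} for $\mc{M}_\beta$ with this channel gives $M_\beta(\sigma,K)=\mc{M}_\beta(\sigma,\omega_{\beta,K})\le\mc{M}_\beta(\rho,\omega_{\beta,H})=M_\beta(\rho,H)$. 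On the other hand, given any channel $T$ and a full-rank $\sigma$, the channel part $T$ together with the Hamiltonian map $\bar{\mc{G}}(H):=-\tfrac1\beta\log T(\omega_{\beta,H})$ defines a GP-map (restricting the output space to $\mathrm{supp}(T(\omega_{\beta,H}))$ via Lemma~\ref{lemma:rank} when $T$ reduces rank), and since $\omega_{\beta,H_\sigma}=\sigma$ it sends $(\rho,H_\sigma)$ to $(T(\rho),-\tfrac1\beta\log T(\sigma))$; thus monotonicity \ref{prop:monotonicity} of $M_\beta$ is exactly $\mc{M}_\beta(T(\rho),T(\sigma))\le\mc{M}_\beta(\rho,\sigma)$. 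This settles i).

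For the forward directions of ii)--iv): each catalytic notion of monotonicity specialises, upon choosing a trivial catalyst, to ordinary monotonicity \ref{prop:monotonicity}, so by i) $\mc{M}_\beta$ obeys the DPI; and additivity \eqref{eq:additiveM} of $\mc{M}_\beta$ follows at once from additivity \ref{prop:fe:additivity} of $M_\beta$ and $\omega_{\beta,H_1+H_2}=\omega_{\beta,H_1}\otimes\omega_{\beta,H_2}$. The only substantive point is super-additivity \eqref{eq:superadditiveM} of $\mc{M}_\beta$ in iii) and iv), which I would establish by exhibiting an explicit free transition that decorrelates a bipartite state into the product of its marginals. For iii), view $(\rho_{1,2},H_1+H_2)$ on $S_1S_2$ together with a two-part catalyst $(\rho_1,H_1)\otimes(\rho_2,H_2)$ on systems $B_1,B_2$ isomorphic to $S_1,S_2$, and apply the unitary channel that swaps $S_1\leftrightarrow B_1$ and $S_2\leftrightarrow B_2$: this merely exchanges identical tensor factors of $\omega_{\beta,H_1}\otimes\omega_{\beta,H_2}\otimes\omega_{\beta,H_1}\otimes\omega_{\beta,H_2}$ and hence is Gibbs-preserving, leaves the Hamiltonian fixed, sends the system to $\rho_1\otimes\rho_2$, and returns the catalyst in the state $\rho_{1,2}$ whose marginals are precisely $\rho_1$ and $\rho_2$, so that $(\rho_{1,2},H_1+H_2)\stackrel{\text{mc}}{>}(\rho_1\otimes\rho_2,H_1+H_2)$; marginal-catalytic monotonicity \ref{prop:mcatalyticmonotonicity} plus additivity then gives $M_\beta(\rho_{1,2},H_1+H_2)\ge M_\beta(\rho_1,H_1)+M_\beta(\rho_2,H_2)$, i.e.\ \eqref{eq:superadditiveM}. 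For iv) the same idea works with a one-part catalyst $(\rho_2,H_2)$ on an auxiliary copy $\tilde S_2$ of $S_2$ and the swap $S_2\leftrightarrow\tilde S_2$: the system ends in $\rho_1\otimes\rho_2$ while the catalyst's reduced state is unchanged ($=\tr_{S_1}\rho_{1,2}=\rho_2$), now correlated with $S_1$, so $(\rho_{1,2},H_1+H_2)\stackrel{\text{cc}}{>}(\rho_1\otimes\rho_2,H_1+H_2)$ and correlated-catalytic monotonicity \ref{prop:ccatalyticmonotonicity} closes the argument.

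For the reverse directions of ii)--iv) I would run the standard cancellation computation. Given a catalytic transition of the relevant kind realised by a GP-map $G$ acting on system plus catalyst, apply monotonicity (equivalently, by i), the DPI) to $G$, expand the left-hand side with additivity of $M_\beta$, and on the right-hand side use additivity to split off the system and then super-additivity \eqref{eq:superadditiveM} (iterated over the $k$ parts in the marginal-catalytic case, where the catalyst comes back internally correlated but with the original marginals) to bound the catalyst contribution below by its value on the original product catalyst; cancelling the finite catalyst term leaves $M_\beta(\rho_S,H_S)\ge M_\beta(\sigma_S,K_S)$. Throughout, one uses that additivity together with any form of monotonicity forces $M_\beta(\rho,H)=M_\beta(\rho,H+C\one)$ (Appendix~\ref{sec:appgauge}), so that the correspondence $\mc{M}_\beta(\rho,\omega_{\beta,H})=M_\beta(\rho,H)$ is unambiguous. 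I expect the main obstacle to be precisely the construction of the decorrelating transitions in the forward directions of iii) and iv): one needs an operation that is simultaneously Gibbs-preserving, implements the decorrelation, and returns the catalyst in an admissible way, and the swap constructions above are the crux; the remaining implications are bookkeeping with i), additivity, and super-additivity.
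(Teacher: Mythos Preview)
Your proposal is correct and follows essentially the same route as the paper: part~i) is argued identically via the correspondence between GP-maps and channels (with Lemma~\ref{lemma:rank} handling rank reduction), the forward directions of iii) and iv) use precisely the same swap constructions with catalysts $(\rho_1,H_1)\otimes(\rho_2,H_2)$ and $(\rho_2,H_2)$ respectively, and the reverse directions are the same cancellation argument combining monotonicity, additivity, and super-additivity. One small imprecision: in the reverse direction of iv) the output $\eta$ is not a product across $S|A$, so you cannot ``use additivity to split off the system'' first---you apply super-additivity directly across the $S|A$ cut to get $M_\beta(\eta,K_S+R_A)\ge M_\beta(\sigma_S,K_S)+M_\beta(\gamma_A,R_A)$ and then cancel; but this is a wording issue, not a gap in the argument.
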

\begin{proof}
Let us first show i) ($\Rightarrow$). Let $T$ be any given quantum-channel. We have to show that $\mc{M}_\beta(T(\rho),T(\sigma))\leq \mc M_\beta(\rho,\sigma)$. But by the previous discussion, $T(\sigma)$ can always be considered to be full-rank. Therefore, the Hamiltonian $H_{T(\sigma)}$ exists and the map $(\rho, H_{\sigma}) \mapsto (T(\rho),H_{T(\sigma)})$ is automatically a GP-map. We therefore obtain
\begin{align}
\mc{M}_\beta(T(\rho),T(\sigma)) &= M_\beta(T(\rho),H_{T(\sigma)}) \\
&\stackrel{\ref{prop:monotonicity}}{\leq} M_\beta(\rho,H_\sigma) = \mc M_\beta(\rho,\sigma).
\end{align}
i) ($\Leftarrow$) follows as
\begin{eqnarray}
M_{\beta}(\sigma,K)&=&M_{\beta}(G(\rho,\omega_{\beta,H}))\\
&=&\mc{M}_{\beta}(\mc{G}_{\beta}^H(\rho),\omega_{\beta,\bar{\mc{G}}(H)})\\
&\stackrel{\eqref{eq:conditiongpmaps}}{=}& \mc{M}_{\beta}(\mc{G}_{\beta}^H(\rho),\mc{G}_{\beta}^H(\omega_{\beta,H}))\\
&\stackrel{\eqref{eq:dataprocessingM}}{\leq}&  M_{\beta}(\rho,H).
\end{eqnarray}
The proof of ii) ($\Rightarrow$) is trivial given i), since \ref{prop:mcatalyticmonotonicity}$\Rightarrow$\ref{prop:monotonicity} and it follows straightforwardly that \eqref{prop:fe:additivity}$\Rightarrow$\ref{eq:additiveM}. The proof of ii) ($\Leftarrow$) follows from noting that $(\rho,H)\stackrel{c}{>}(\sigma,K)$ implies that there exist $G$ so that
\begin{equation}
G(\rho \otimes \gamma, H+R)=(\sigma,K)\otimes (\gamma,R).
\end{equation}
Hence, we find that
\begin{eqnarray}
\nonumber \mc{M}_{\beta}(\sigma,\omega_{\beta,K})&+&\mc{M}_{\beta}(\gamma,\omega_{\beta,R})\stackrel{\eqref{eq:additiveM}}{=}\mc{M}_{\beta}(\sigma \otimes \gamma,\omega_{\beta,K} \otimes \omega_{\beta,R})\\
&\stackrel{\eqref{eq:dataprocessingM}}{\leq}& \mc{M}_{\beta}(\rho,\omega_{\beta,H})+\mc{M}_{\beta}(\gamma,\omega_{\beta,R}),
\end{eqnarray}
which implies straightforwardly $M_{\beta}(\rho,H)\geq M_{\beta}(\sigma,K)$, that is, \eqref{prop:catalyticmonotonicity}.

Now we show iii) ($\Rightarrow$). Note that \ref{prop:mcatalyticmonotonicity} implies \ref{prop:catalyticmonotonicity}, since a correlated catalyst is a particular case of using a catalyst. Together with the equivalences i) and ii), we should only show super-additivity of Eq. \eqref{eq:superadditiveM}. This follows from the fact that
\begin{align}
	(\rho_{1,2},H_1+H_2)\stackrel{mc}{>}(\rho_1\otimes\rho_2,H_1+H_2).
\end{align}
	To show this, let us choose as catalyst $\gamma=\rho_1 \otimes \rho_2$. The GP map performing the transition is just a swap between the initial system and the catalyst. Hence, the final system is $(\sigma,K)=(\rho_1 \otimes \rho_2,H_1+H_2)$ and the final catalyst $\tilde{\gamma}=(\rho_{1,2},H_1+H_2)$, which clearly fulfills the conditions of Definition \ref{def:marginal-catalyticfreetransition}.
To see iii) ($\Leftarrow$), we first note that \eqref{eq:additiveM} and \eqref{eq:dataprocessingM} already imply \ref{prop:catalyticmonotonicity} and \ref{prop:fe:additivity}. It thus remains to show that adding \eqref{eq:superadditiveM} also implies \ref{prop:mcatalyticmonotonicity}. This follows since  super-additivity of $\mc M_\beta$ together with additivity implies
\begin{align}
M_\beta(\tilde{\gamma}_A,R_A) &= \mc M_\beta (\tilde{\gamma}_A, \otimes_i \omega_{\beta, R^i})\\&\geq \sum_i \mc M_\beta(\gamma^i,R^i) \\&= M_\beta(\otimes_i (\gamma^i,R^i) = \sum_i M_\beta(\gamma^i,R^i).
\end{align}
Finally, let us turn to iv). Again, since correlated catalytic transitions include catalytic transitions, the only non-trivial property left to show here is that \ref{prop:ccatalyticmonotonicity} and \ref{prop:fe:additivity} also imply \eqref{eq:superadditiveM}. To see this consider an initial object $(\rho_{1,2},H_1+H_2)$ together with the catalyst $(\rho_2,H_2)$ and use a similar trick as for marginal catalytic transitions. Since a swap between the second system of the initial object and the catalyst is a Gibbs-preserving transition which leaves the catalyst correlated but otherwise unchanged,  we know that $M_\beta(\rho_{1,2},H_1+H_2) \geq M_\beta(\rho_1\otimes\rho_2,H_1+H_2)$. But then we have
\begin{align}
\mc M_\beta(\rho_{1,2},\omega_{\beta,H_1}\otimes \omega_{\beta,H_2}) &= M_\beta(\rho_{1,2},H_1+H_2) \\
&\geq M_\beta(\rho_1\otimes\rho_2,H_1+H_2)\\& = \mc M_\beta(\rho_1\otimes \rho_2,H_1+H_2),
\end{align}
which completes the argument.
\end{proof}

\clearpage

\end{document}